\newtheorem{theorem}{Theorem}[section]
\newcommand*{\defeq}{\mathrel{\rlap{%
                     \raisebox{0.3ex}{$\m@th\cdot$}}%
                     \raisebox{-0.3ex}{$\m@th\cdot$}}%
                     =}
\def\ii{{\rm i}}
\def\sx{\sigma^{\rm x}}
\def\sy{\sigma^{\rm y}}
\def\sz{\sigma^{\rm z}}
\def\tr#1{{\rm tr}(#1)}
\def\1{\mathbbm{1}}
\def\cL{{\cal L}}
\def\cLH{{\cal L}_{\rm H}}
\def\cLd{{\cal L}_{\rm dis}}
\def\ket#1{{| #1 \rangle}}
\def\bra#1{{\langle #1 |}}
\def\braket#1#2{{\langle #1 | #2 \rangle}}
\def\tit#1{{\em #1},}
\begin{document}

\title{Dissipative remote-state preparation in an interacting medium}

\author{Marko \v Znidari\v c}
\affiliation{Physics Department, Faculty of Mathematics and Physics, University of Ljubljana, Ljubljana, Slovenia}

\date{\today}

\begin{abstract}
Standard quantum state preparation methods work by preparing a required state locally and then distributing it to a distant location by a free-space propagation. We instead study procedures of preparing a target state at a remote location in the presence of an interacting background medium on which no control is required, manipulating only local dissipation. In mathematical terms, we characterize a set of reduced steady states stabilizable by local dissipation. An explicit local method is proposed by which one can construct a wanted one-site reduced steady state at an arbitrary remote site in a lattice of any size and geometry. In the chain geometry we also prove uniqueness of such a steady state. We demonstrate that the convergence time to fixed precision is smaller than the inverse gap, and we study robustness of the scheme in different medium interactions.
\end{abstract}

\pacs{03.65.Yz, 03.67.Hk, 42.50.Dv}

%03.65.Aa 	Quantum systems with finite Hilbert space
%03.65.Yz 	Decoherence; open systems; quantum statistical methods
%03.67.Bg 	Entanglement production and manipulation
%42.50.Dv 	Quantum state engineering and measurements
%03.67.Hk 	Quantum communication
%05.50.+q 	Lattice theory and statistics

% wcount= 5x16=80(eqs)+3170+305[65+80+80+80](aspect=3.33,2.5,2.5)=3555

\maketitle

{\em Introduction.--}
Preparation of quantum states is a fundamental prerequisite for quantum technologies~\cite{Nielsen}, e.g., in quantum teleportation~\cite{tele} or quantum computation~\cite{divincenzo}. Frequently, these states are needed at different spatial locations and one has to solve a problem of preparing a given state at a remote place by using only local resources that are spatially separated from the remote location. Because quantum resources needed to prepare a given quantum state are usually involved and expensive, a standard approach is to have a dedicated device that produces states locally, which are then sent through free space to a required location. In the present work we address and solve the question of how to achieve the same if the medium through which one has to ``send'' a state is interacting. One can envisage this interaction to be due to a non-negligible fundamental interacting background, or, e.g., because the whole setting is embedded in a solid-state environment where interactions are ubiquitous, a situation of importance in quantum computation.

We are going to study a concrete setting consisting of a lattice system described by a Markovian master equation of the Lindblad type~\cite{Lindblad}, being within experimental realm~\cite{Barreiro:11,Krauter:11,Lin:13}. An interacting medium is described by a fixed local Hamiltonian, while the operations that one is allowed to make consist of an arbitrary Lindblad evolution on a single site. After a long time an initial state converges to a steady state (SS), and we are interested in a reduced SS on a given remote target site; see also Fig.~\ref{fig:chain}. We want to characterize a set of reduced states stabilizable by local dissipation (also called stabilizable states, or reachable states). 

Existing procedures of transporting a given state to a target location, like doing swap operations, or using quantum wires~\cite{Bose:03,Christandl:04}, all require some control over an interacting medium. In our method we can do without such control. Characterizing the power of open-system~\cite{Breuer} evolution, for instance, the set of reachable states and the controllability of a master equation~\cite{Altafini:03,Ticozzi:08}, has received a lot of attention recently, in particular the optimality of time required to transform a given initial state to a given target state~\cite{Sugny:07,Lapert:10,Giovannetti:13}. Allowing any transformations, one can show that Lindblad equations are in fact a universal resource~\cite{Verstraete:09}. Several other general results are also known, for instance, conditions under which a given pure state can be a SS~\cite{pureSS,Kraus:08}, see also Ref.~\cite{Popkov15}. Having control over unitary evolution allows one to decrease, or even remove, detrimental effects of dissipation~\cite{Recht,Sauer:13}. Frequently, though, we only have limited control and therefore a pressing problem is to characterize the power of constrained resources. In such case there is less symmetry, the problem is more difficult, with only few results available. An important constraint is the locality of the interactions, studied for pure SSs in Ref.~\cite{Ticozzi:12}, for translationally invariant states in Ref.~\cite{JMP:14}, and for frustration-free states in Ref.~\cite{Johnson:15}. It has also been shown that local dissipation limits the lowest attainable temperature~\cite{geometry15}. 

{\em The setting.--} The Lindblad equation is~\cite{Lindblad}
\begin{eqnarray}
\frac{{\rm d}\rho}{{\rm d}t}&=&{\cal L}(\rho)=\ii[\rho,H]+\cLd(\rho),
\label{eq:Lin}
\end{eqnarray}
where $\cLd(\rho)=\sum_k 2L_k \rho L_k^\dagger-\rho L_k^\dagger L_k-L_k^\dagger L_k \rho$ is a dissipator that depends on a set of traceless Lindblad operators $L_k$.
\begin{figure}[!t]
\centering \includegraphics[width=2.3in]{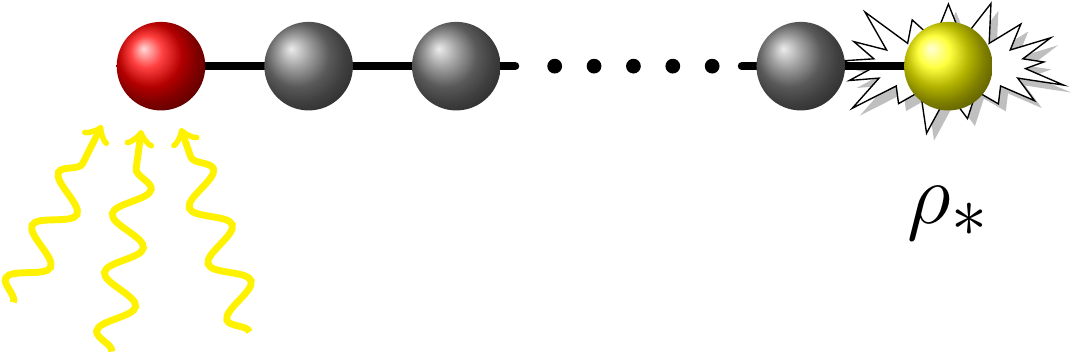}
\caption{(Color online) Remote preparation of states: acting only on the first spin (the red ball), we want to prepare a given target state $\rho_*$ on the last spin (the yellow ball).}
\label{fig:chain}
\end{figure}
After a long time the solution of the Lindblad equation converges to a SS $\rho_\infty=\lim_{t \to \infty} {\rm e}^{\cL t}\rho(0)$, and we are interested in a reduced SS on a given target site $k$, $\rho_k={\rm tr}_{j \neq k}(\rho_\infty)$. We want to characterize the set of $\rho_k$ reachable by controlling only one-site dissipation, keeping $H$ fixed, as well as find a concrete procedure achieving a given $\rho_k=\rho_*$. The following theorem about SSs of permutation Hamiltonians under local one-site dissipation will be of great help.
\begin{theorem}
Let us have a lattice of $n$ sites (each having finite dimension $d$), described by local Lindblad (\ref{eq:Lin}) generator $\cL_m$ acting nontrivially only on the site $m$, and
\begin{equation}
H_{\rm P}=\sum_{j,k} P_{j,k},
\label{eq:P}
\end{equation}
where $P_{j,k}$ is a permutation operator between two sites (acting as $P_{j,k} \ket{\alpha}_j\ket{\beta}_{k}=\ket{\beta}_j\ket{\alpha}_{k}$), and the sum running over an arbitrary set of connections (not necessarily nearest neighbor). Denoting by $\rho_*$ a single-site SS of $\cL_m$, i.e., $\cL_m \rho_*=0$, the SS on the whole lattice is then a product state $\rho_\infty:=\rho_*^{\otimes n}$, $\cL \rho_\infty=0$, where $\cL:=\cL_m\otimes 1+\cLH$ and $\cLH\rho:=\ii [\rho,H_{\rm P}]$. In a one-dimensional chain (with only the nearest neighbor coupling $P_{j,j+1}$) with $\cL_m$ on the edge ($m=1$ or $m=n$), the above $\rho_\infty$ is a unique SS of $\cL$ if and only if $\rho_*$ is a unique SS of $\cL_m$.
\label{Th1}
\end{theorem}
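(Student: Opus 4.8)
The plan is to treat existence and the two directions of uniqueness in turn.

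\emph{Existence of the product steady state.} I would verify $\cL\rho_\infty=0$ termwise. The dissipative part gives $(\cL_m\otimes1)(\rho_*^{\otimes n})=(\cL_m\rho_*)\otimes\rho_*^{\otimes(n-1)}=0$ because $\cL_m\rho_*=0$. For the Hamiltonian part it is enough that $[\rho_*^{\otimes n},P_{j,k}]=0$ for each connection in the sum: restricted to the two sites $j,k$ the state is $\rho_*\otimes\rho_*$, and swapping two copies of the same operator does nothing, so $P_{j,k}(\rho_*\otimes\rho_*)P_{j,k}=\rho_*\otimes\rho_*$, i.e.\ $[\rho_*\otimes\rho_*,P_{j,k}]=0$; since $P_{j,k}$ acts as the identity on the remaining sites, $[\rho_*^{\otimes n},P_{j,k}]=0$. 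Summing over the connection set gives $[\rho_*^{\otimes n},H_{\rm P}]=0$, hence $\cLH\rho_\infty=0$ and $\cL\rho_\infty=0$. Nothing in this part uses one‑dimensionality or the nearest‑neighbour pattern, so it holds for an arbitrary geometry.

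\emph{Uniqueness.} The ``only if'' direction is immediate from the existence part just proven: a second single‑site steady state $\rho_*'\neq\rho_*$ of $\cL_m$ would yield a second steady state $\rho_*'^{\otimes n}$ of $\cL$, contradicting uniqueness; this too holds on any lattice. For the ``if'' direction on the chain I would use the standard ergodicity criterion: a Lindbladian possessing a full‑rank steady state has a unique steady state iff the only operator commuting with its Hamiltonian and with all its jump operators and their adjoints is a scalar. Restricting to the support of $\rho_*$ (and correspondingly of $\rho_\infty$, which then becomes full rank) — a routine reduction, in which the effective Hamiltonian on the restricted space takes over the role of the Hamiltonian part of $\cL_m$ — one may assume $\rho_*$ faithful. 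Writing the jump operators of $\cL_m$ as $\ell_k$ and its Hamiltonian part as $h$, all living on site $1$, uniqueness of $\rho_*$ is exactly the statement that $\{\ell_k,\ell_k^\dagger,h\}$ have trivial commutant on the single site. I then need to show that $\{\ell_k,\ell_k^\dagger,h+H_{\rm P}\}$ (the $\ell_k$ embedded on site $1$) have trivial commutant on the whole chain, with $H_{\rm P}=\sum_{j=1}^{n-1}P_{j,j+1}$.

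I would establish this by induction on $n$, peeling off the far end: the $n$‑site generator equals the $(n-1)$‑site one on sites $1,\dots,n-1$, tensored with the identity on site $n$, plus $\ii[\,\cdot\,,P_{n-1,n}]$. The crucial step is a splitting lemma: any $X$ commuting with the $n$‑site generating set must commute \emph{separately} with the $(n-1)$‑site Hamiltonian and with $P_{n-1,n}$. The input is that $\ell_k$ (on site $1$) commutes with every $P_{j,j+1}$ for $j\ge2$, and that iterated commutators of $h+H_{\rm P}$ with $\ell_k$ enlarge the support by exactly one site at a time along the chain, so a commutant element cannot couple the pair $\{n-1,n\}$ to the rest in any hidden way; this is where the nearest‑neighbour chain structure (and the edge placement of $\cL_m$, which makes the peeling clean) is genuinely used — for densely connected graphs uniqueness can fail, e.g.\ with all‑to‑all coupling $H_{\rm P}$ is a central element of the permutation‑group algebra, so $P_{2,3}$, which also commutes with the site‑$1$ dissipation, is a nontrivial conserved quantity. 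Granting the lemma, $X$ commutes with the entire $(n-1)$‑site algebra, which by the inductive hypothesis is the full operator algebra on sites $1,\dots,n-1$, so $X=1^{\otimes(n-1)}\otimes Y$; then $[X,P_{n-1,n}]=0$ reduces to $1\otimes Y=Y\otimes1$ on the last two sites, forcing $Y\propto1$ and hence $X\propto1$.

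The main obstacle is the splitting lemma itself: one controls only the combination $h+H_{\rm P}$, never $H_{\rm P}$ on its own, so the work is to show that commuting with that combination already forces commuting with each block along the chain — this is where the geometric input must be used with care and the calculation is delicate. If the algebraic bookkeeping becomes unwieldy, a fallback is a LaSalle‑type argument built on the relative‑entropy Lyapunov function $S(\rho\|\rho_\infty)$, which is monotone along the flow by monotonicity of the relative entropy under the completely positive maps ${\rm e}^{\cL t}$ (which fix $\rho_\infty$), together with an analysis of the limiting invariant set; but that analysis again boils down to the same irreducibility statement, so I expect the induction to be the more transparent route.
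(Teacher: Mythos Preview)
Your existence argument and the ``only if'' direction of uniqueness are correct and essentially identical to the paper's.

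For the ``if'' direction the paper takes a different, more direct route. Rather than the commutant criterion with an induction peeling off the far end, it invokes Evans' algebra--generation criterion and \emph{constructs} the full operator algebra explicitly: starting from the full one--site algebra at the driven edge (available by hypothesis), it uses the identity
\[
\sum_{i=1}^{d}\ket{i}\bra{j}_r\,H_{\rm P}\,\ket{k}\bra{i}_r=\ket{k}\bra{j}_{r+1}+\ket{k}\bra{j}_{r-1}\qquad(k\neq j)
\]
to transport matrix units from site $r$ to sites $r\pm1$; iterating from the edge produces all off--diagonal (hence all) matrix units at every site. This explicit ``matrix--unit transport'' is precisely the mechanism your heuristic ``iterated commutators of $h+H_{\rm P}$ with $\ell_k$ enlarge the support by one site'' is pointing to, but here it is made concrete and the proof closes in a few lines.

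By contrast, your splitting lemma --- that any $X$ commuting with $\{\ell_k,\ell_k^\dagger,h+H_{\rm P}^{(n)}\}$ must commute \emph{separately} with $h+H_{\rm P}^{(n-1)}$ and with $P_{n-1,n}$ --- is the genuine gap. The justification you offer (support spreading of iterated commutators) is an intuition, not a proof: knowing that nested commutators are supported on $\{1,\dots,r\}$ does not by itself force a commutant element to decouple the last bond. Dually, your lemma is equivalent to showing that the algebra generated on $n$ sites already contains $P_{n-1,n}$, which is exactly the nontrivial content the paper's formula supplies. A secondary point: the ``routine'' restriction to the support of $\rho_*$ deserves a word of care (the effective Hamiltonian on the support picks up contributions from $L_k^\dagger L_k$), though here it is harmless since $H_{\rm P}$ preserves $(\mathrm{supp}\,\rho_*)^{\otimes n}$. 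The paper's use of Evans' criterion sidesteps the full--rank issue entirely.
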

\begin{proof}
The first part is trivial: $\rho_\infty$ is invariant to any permutation, $P_{j,k} \rho_\infty P_{j,k}=\rho_\infty$, and thus $\cLH\rho_\infty=0$. At the same time we also have $(\cL_m\otimes \1) \rho_\infty=0$ because the reduced state of $\rho_\infty$ on the $m$-th site is $\rho_*$. Regarding the uniqueness, it is clear that if $\rho_\infty$ is a unique SS of $\cL$, then $\rho_*$ must be a unique SS of $\cL_m$. For the other direction of the proof, we use the fact that the SS is unique if and only if Lindblad operators, their adjoints, and $H$, span under multiplication and addition the whole operator space~\cite{Evans:77}. If $\rho_*$ is a unique SS of $\cL_m$ (defined in terms of the local Lindblad operators $L_m^{(j)}$ and $H_m$), we know that the set $\{ L_m^{(j)},L_m^{(j)^\dagger},H_m \}$ spans the local operator space at site $m$. All operators at other chain sites can be constructed by the following recursive mapping, $\sum_{i=1}^d \ket{i}\bra{j}_r H_{\rm P} \ket{k}\bra{i}_r = \ket{k}\bra{j}_{r+1}+\ket{k}\bra{j}_{r-1}$, holding for $k\neq j$ (if $r$ is on the edge, the rhs is without one of the terms). Starting from the edge site $m$, we can construct all off-diagonal operators at the neighboring site (and all diagonal ones by products of the off-diagonal). Recursively repeating the procedure we generate the whole basis, progressing from one edge to the other.
\end{proof}
The above SS $\rho_\infty$ is unique also if dissipation acts on any chain site eother than the middle one for an odd $n$ ($m=(n+1)/2$). Potential degeneracy of the SS on other lattices can be removed by placing $\cL_m$ at several sites. Such a $\rho_\infty$ is an example of a frustration-free SS~\cite{Johnson:15}. Hamiltonians treated in the above theorem are in general called ${\rm SU}(d)$ Heisenberg models (chains), important examples being the standard isotropic Heisenberg chain for $d=2$ (where one has $H=\sum_{j=1}^{n-1} \sx_j \sx_{j+1}+\sy_j \sy_{j+1}+\sz_j \sz_{j+1}=\sum_{j=1}^{n-1}[2P_{j,j+1}-\mathbbm{1}]$), or the spin-orbital model~\cite{SO} having $d=4$, i.e., a system with a local two-qubit space. Theorem \ref{Th1} completely answers the question of SSs under strictly local Lindblad dissipation in such systems. Steady states are rather simple from a complexity point of view -- they are simple product states -- however, for our purpose they are just what we need.

{\em Preparation of remote states.--} 
Let us consider a chain lattice composed of $n$ sites, with each site having the dimension $d=2$ (everything we present works for any finite $d$). We would like to prepare an arbitrary target qubit state $\rho_*$ at the far end of our chain (at site $j=n$) by doing operations only on the first site ($j=1$); see Fig.~\ref{fig:chain} . Theorem \ref{Th1} tells us how to proceed: choose a one-site Lindbladian $\cL_1$ that has the wanted $\rho_*$ for the unique SS, and the Heisenberg Hamiltonian. Time evolution by $\cL=\cL_1+\cLH$ then results in $\rho(t)={\rm e}^{\cL t}\rho(0)$, which after a long time converges to the wanted state,
\begin{equation}
\lim_{t \to \infty} {\rm e}^{\cL t}\rho(0)=\rho_*\otimes\rho_* \otimes \cdots \otimes \rho_*,\qquad \forall \rho(0).
\end{equation}
Our procedure is different than the unitary state transfer with quantum wires where a special $H$ is used to gradually transfer a state from one end to the other~\cite{Bose:03,Christandl:04,Bayat:14}. There, at least some control over the wire is required, be it preparation of a special initial state, see though Ref.~\cite{Kim:08}, and/or, e.g., extra engineered magnetic fields. In our scheme no control over the interacting medium is required~\cite{footSwap}: for different $\rho_*$'s we are only adjusting local dissipation at the site $j=1$, while $H$ is held fixed, thereby evolving the system in such a way that the final reduced state at site $j=n$ is $\rho_*$. Also, our procedure is stabilization and not transfer, and is, as such, inherently more robust. It works for any initial state and any sufficiently long time -- we don't have to use a specific initial state or stop at a special time~\cite{Kim:08}. We also note that $\rho(t)$ is, in general, not factorizable at intermediate times, even when starting with a product initial state, and therefore the dynamics cannot be described by a mean-field approximation, like, e.g., in Ref.~\cite{MF}. Regarding the choice of $\cL_1$, there is still a certain freedom as there exist different $\cL_1$'s having the same SS. Several explicit constructions~\cite{footL} are known that use different number of Lindblad operators, e.g., just one Lindblad operator~\cite{Baum08}, or $\log_2{d}$ (for pure states) in Ref.~\cite{Kraus:08}, or a maximal number of $d^2$ Lindblad operators in Ref.~\cite{JSTAT09}. In practice it is important not just that we can prepare an arbitrary state but also how fast and robust the preparation procedure is. We shall study these questions in the rest of the paper. 

{\em Convergence time.--} Convergence time to a stationary state is in general dictated by a spectral gap $g$ of $\cL$. The spectral gap is $g=-\operatorname{\mathbbm{R}e}(\lambda_1)$, where $\lambda_1$ is the eigenvalue of $\cL$ with the largest nonzero real part. Any initial state $\rho(0)$ converges to a unique SS $\rho_\infty$ within a time $\tau$ proportional to the inverse gap, $\tau \sim 1/g$. On general grounds one can argue~\cite{Gaps} that for local dissipation -- our remote-state preparation scheme is an example -- the convergence time must grow at least linearly with the system size, $\tau \sim n$. It has been found~\cite{Medvedyeva14}~\cite{Gaps} though that in integrable systems one typically finds scaling $\tau \sim n^3$. Note that permutation Hamiltonians (\ref{eq:P}) are solvable by Bethe ansatz.
\begin{figure}[!t]
\centering \includegraphics[width=1.61in]{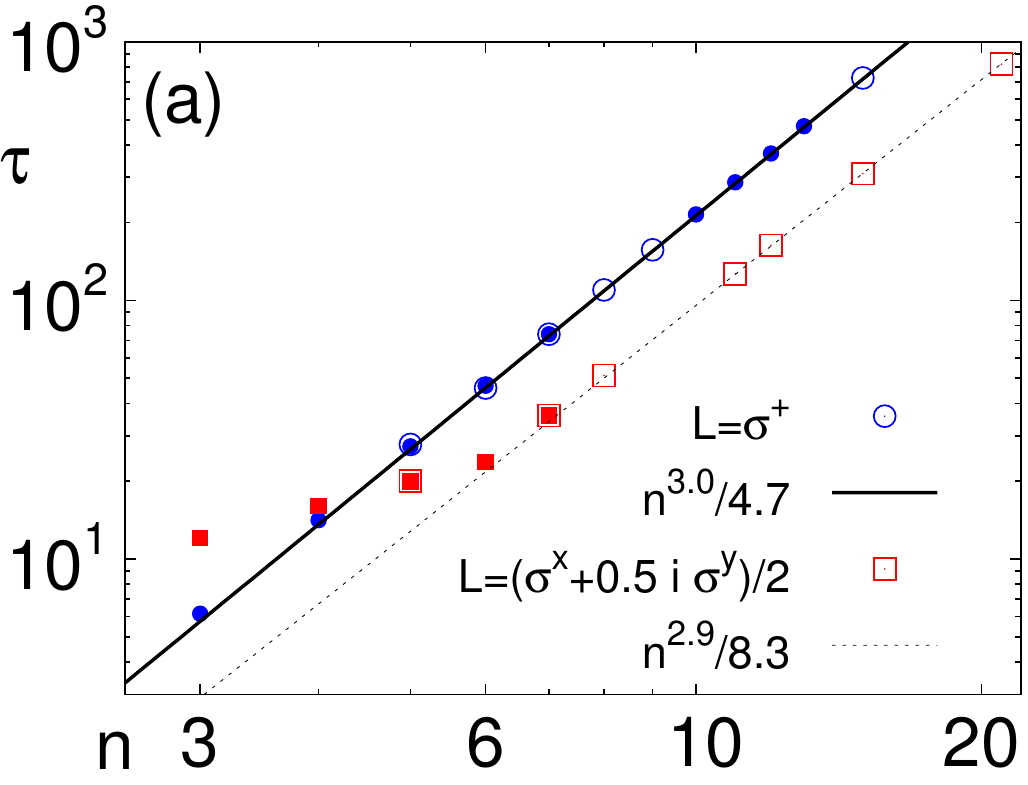} \includegraphics[width=1.69in]{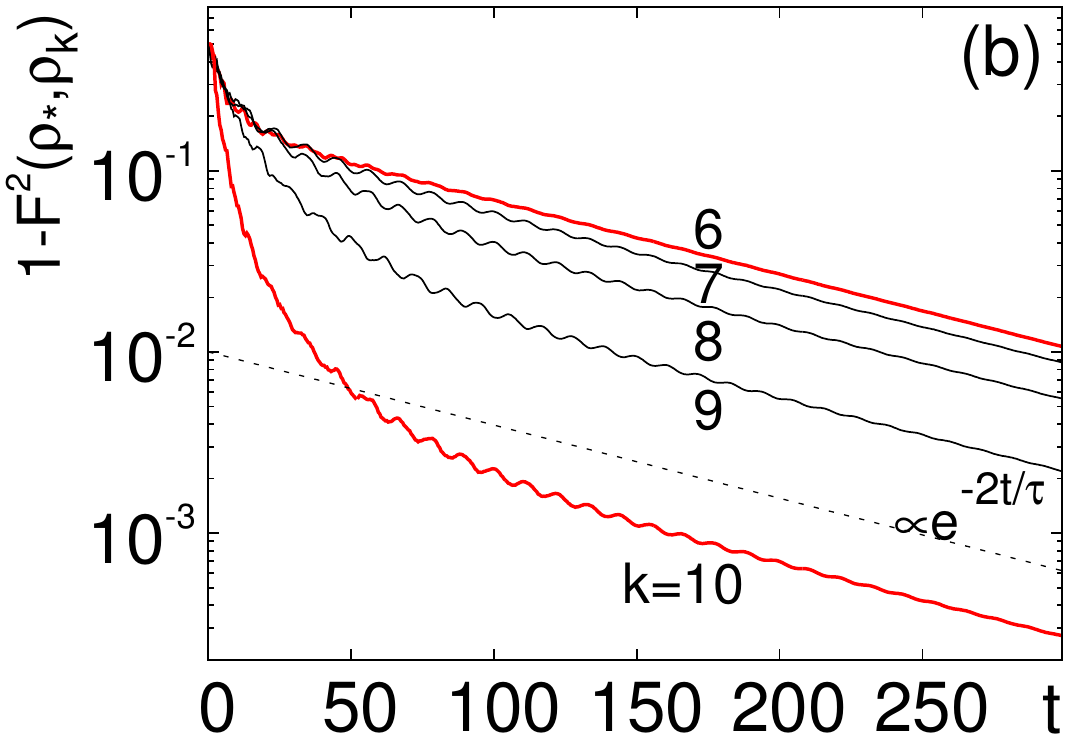} 
\caption{(Color online) (a) Convergence time $\tau=1/g$ for remote-state preparation with the Heisenberg model grows asymptotically as $\tau \sim n^3$. Results for two different target states are shown, mixed $\rho_*=\frac{1}{2}\1+\frac{2}{5}\sz$ (the red squares) and pure $\rho_*=\ket{0}\bra{0}$ (the blue circles). Full symbols are obtained by exact diagonalization, empty by open-system tDMRG adaptation~\cite{NJP10}. (b) Time dependence of fidelity between the target state $\rho_*$ and the reduced state $\rho_k(t)={\rm tr}_{j \neq k}(e^{\cL t}\rho(0))$ at site $k$, all for $n=10$ and $L=\sigma_1^+$. The dashed line is an exponential with the $\tau \approx 215$ read from frame (a).}
\label{fig:convtime}
\end{figure}
As an initial state $\rho(0)$ for our numerical demonstration we use a product pure state with alternating $\ket{0}\pm \ket{1}$ at even/odd sites (similar results are obtained for other choices). In Fig.~\ref{fig:convtime}(a) we can see that for our protocol the Liouvillian gap indeed scales as $\sim n^{-3}$ irrespective of the choice of the target state $\rho_*$. It can happen, though, that the gap is not the whole story and that particular (important) observables converge on a shorter time scale~\cite{JSTAT11}. In addition, the decay in the thermodynamic limit can be different than a simple exponential decay~\cite{Cai:13,Medvedyeva14} (which happens for an isolated $\lambda_1$). With that in mind we also calculated how fast the reduced state at a particular site $k$ approaches its asymptotic SS value $\rho_*$. As a measure of convergence we use quantum fidelity~\cite{Nielsen}, defined as $F(\rho,\sigma)=\tr{\sqrt{\sqrt{\rho}\sigma\sqrt{\rho}}}$. For pure states it simplifies to $F=|\braket{\psi}{\varphi}|$. In Fig.~\ref{fig:convtime}(b) we see that, even though the asymptotic decay is given by the gap, $1-F^2 \sim \exp{(-2tg)}$, the fidelity behaves quite differently at different sites. In particular, the asymptotic exponential decay with time constant $\tau=1/g$ kicks in only after an initial nonexponential decay, duration of which is longer the farther away we are from the middle of the chain ($F$ is approximately the same at sites symmetric with respect to the middle of the chain). At the last (and the first) site the convergence to our target state $\rho_*$ is the fastest (the red line for $k=10$ in Fig.~\ref{fig:convtime}(b)). Compared to state transfer procedures~\cite{Bose:03,Christandl:04}, the state $\rho_*$ does not gradually travel through the chain, instead, the convergence is the fastest at the far-end target site. What is more, the amplitude of the transient initial decay also increases with an increasing $n$. To demonstrate that, we show in Fig.~\ref{fig:scal}(a) the scaling of fidelity at the middle and the last site for different system sizes $n$. We can see that for large times one has a scaling form
\begin{equation}
1-F^2 \asymp \frac{1}{n^\nu}f(t/n^3),
\end{equation}
with some scaling function $f(x)$ that approaches an exponential for a large $x$. We note that, while the shape of the scaling function might depend on a particular choice of the initial state and ${\cal L}_1$, the presented scaling is generic. Interesting is a nontrivial prefactor $1/n^\nu$, with $\nu \approx 0.8$ for the middle site $k=n/2$, and $\nu \approx 2.8$ for the far-end site at $k=n$~\cite{foot4}. As a consequence, the error $1-F^2$ at a fixed time that scales $\sim n^3$ decreases with $n$ as $\sim 1/n^\nu$. This means that the required time to reach a fixed error grows with $n$ slower than $\sim n^3$, see Fig.~\ref{fig:scal}(b). While it is hard to conclude about the exact value of the asymptotic scaling, the convergence time at which a fixed precision $1-F^2$ is reached is closer to $t \sim n^2$ than to $t \sim 1/g=n^3$. This is rather intriguing and has to do with the clustering of eigenvalues around $0$ and the structure of decay eigenmodes.
\begin{figure}[t!]
\centering \includegraphics[width=1.85in]{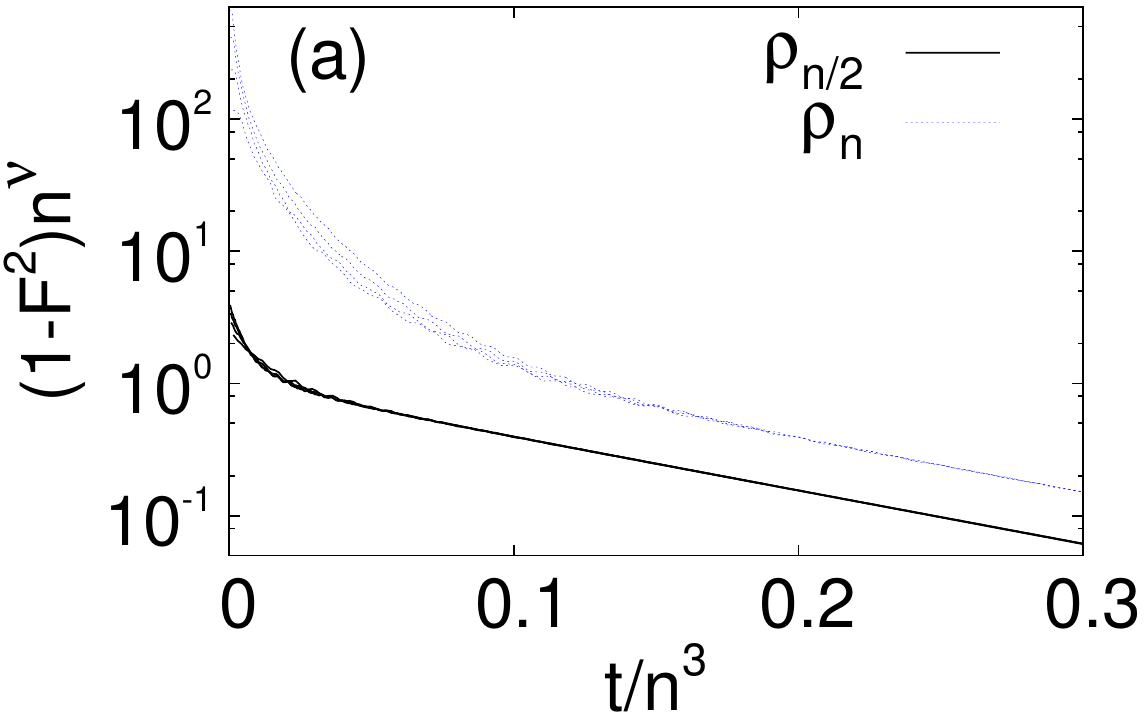} \includegraphics[width=1.5in]{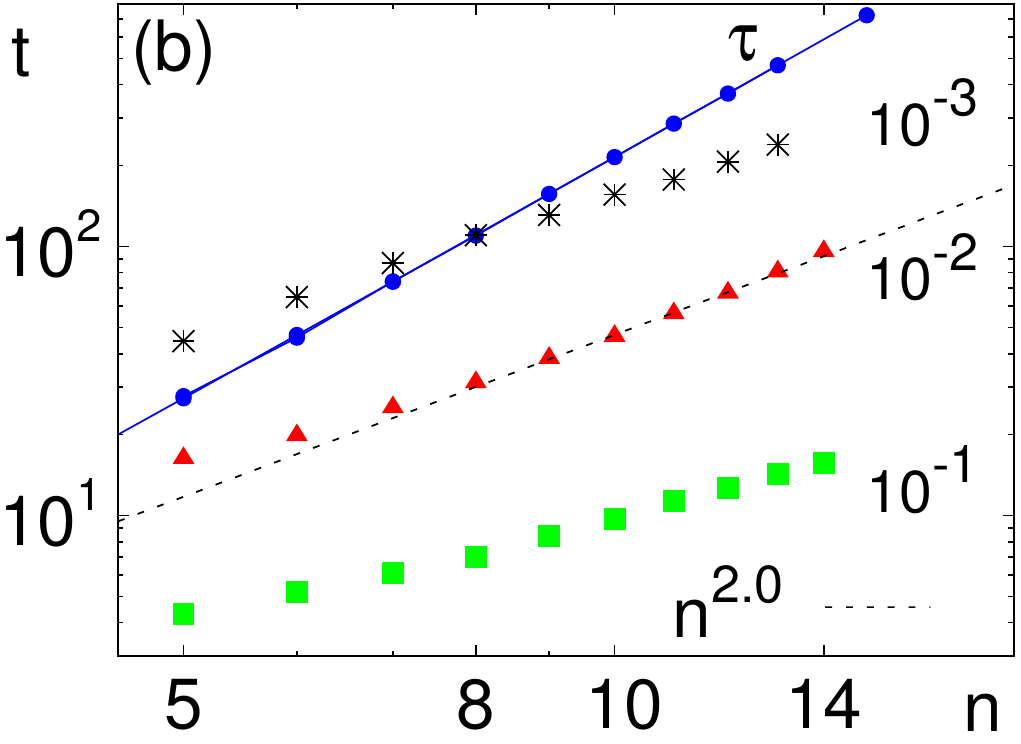}
\caption{(Color online) (a) Fidelity scales as $1-F^2 \asymp \frac{1}{n^\nu}f(t/n^3)$, where the scaling exponent is $\nu \approx 0.8$ for the middle of the chain (the black curves) and $\nu \approx 2.8$ for the end site (the blue/dotted curves). Data is shown for $L=\sigma_1^+$ and $n=7,9,11,13$ (the curves are almost overlapping). (b) Time at which $1-F^2(\rho_*,\rho_n)$ reaches the value $10^{-1}, 10^{-2}, 10^{-3}$ (squares, triangles, stars) grows slower than the inverse gap $\tau \sim n^3$ [the blue circles, data from Fig.~\ref{fig:convtime}(a)], $L=\sigma^+_1$.}
\label{fig:scal}
\end{figure}

{\em Choice of Hamiltonian.--} We next study how different choices of the Hamiltonian influence our remote-state preparation ability. That is, we want to understand whether with other choices of $H$ one can also prepare an arbitrary $\rho_*$ just by varying $\cL_1$. In full generality this is a very difficult question so we will limit our discussion to two important cases. First is a general theorem showing that for a certain type of $H$ only a limited fraction of states can be reached. Second is a full characterization of the set of reachable states for an $XXZ$ type Hamiltonian on $n=2$ qubits, a situation of perhaps the most immediate experimental relevance.

The following theorem limits the set of one-qubit reduced stabilizable states for bipartite systems that have a separable coupling between the target site (subsystem index $n$) and the rest (subsystem ${\rm A}$).
\begin{theorem}
Let us have a master equation with a general Lindblad superoperator $\cL_{\rm A}\otimes \1_n$ (containing arbitrary dissipation as well as a Hamiltonian) and a product coupling Hamiltonian $H=\sz_{n-1}\otimes \sz_{n}$ between one of the spins in ${\rm A}$ and the $n$-th spin. Then the SS ${\rm tr}_{\rm A}{(\rho_\infty)}$ is always diagonal in the eigenbasis of $\sz_{n}$.
\label{Th2}
\end{theorem}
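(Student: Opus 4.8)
The plan is to exhibit a discrete symmetry of the full Liouvillian that acts on the target site as conjugation by $\sz_n$, and then to transport it through the partial trace over ${\rm A}$. Write $\cL=\cL_{\rm A}\otimes\1_n-\ii\,[H,\cdot\,]$ with $H=\sz_{n-1}\otimes\sz_n$, and define the involution $\Phi(X)\defeq(\1_{\rm A}\otimes\sz_n)\,X\,(\1_{\rm A}\otimes\sz_n)$ on operators. First I would check that $\Phi$ commutes with $\cL$. It commutes with $\cL_{\rm A}\otimes\1_n$ for a trivial reason: the two superoperators act on disjoint tensor factors, $\cL_{\rm A}$ only on the ${\rm A}$ part of an operator and $\Phi$ only on the site-$n$ part. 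And it commutes with $[H,\cdot\,]$ because $\sz_n$ commutes with itself: $(\1_{\rm A}\otimes\sz_n)\,H\,(\1_{\rm A}\otimes\sz_n)=\sz_{n-1}\otimes(\sz_n\sz_n\sz_n)=\sz_{n-1}\otimes\sz_n=H$, so conjugating $[H,X]$ by $\1_{\rm A}\otimes\sz_n$ simply returns $[H,\Phi(X)]$. (In the language of symmetries, $\1_{\rm A}\otimes\sz_n$ is a \emph{strong} symmetry — it commutes with $H$ and with every Lindblad operator of $\cL_{\rm A}\otimes\1_n$, the latter being supported on ${\rm A}$ — and in particular $\sz_n$ is a conserved quantity, $\cL^\dagger(\1_{\rm A}\otimes\sz_n)=0$.)

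Since $\Phi$ commutes with $\cL$ it commutes with the semigroup ${\rm e}^{\cL t}$ and hence maps steady states to steady states. When the stabilization scheme has a unique SS $\rho_\infty$ — the situation of interest — this forces $\Phi(\rho_\infty)=\rho_\infty$, i.e. $(\1_{\rm A}\otimes\sz_n)\,\rho_\infty\,(\1_{\rm A}\otimes\sz_n)=\rho_\infty$. Taking ${\rm tr}_{\rm A}$ of both sides and using ${\rm tr}_{\rm A}\big[(\1_{\rm A}\otimes\sz_n)\,X\,(\1_{\rm A}\otimes\sz_n)\big]=\sz_n\,({\rm tr}_{\rm A}X)\,\sz_n$ (immediate on product operators $A'\otimes B'$, hence for all $X$ by linearity), one obtains $\sz_n\,\sigma\,\sz_n=\sigma$ with $\sigma\defeq{\rm tr}_{\rm A}(\rho_\infty)$. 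Conjugation by $\sz_n$ negates precisely the off-diagonal matrix elements in the $\sz_n$ eigenbasis, so $\sigma$ must be diagonal there, which is the assertion.

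The part requiring the most care is the step $\Phi(\rho_\infty)=\rho_\infty$. With a unique SS it is immediate, and since that is the regime relevant to stabilizability I would present the argument in that form. If $\cL$ has a degenerate stationary manifold — generic here, precisely because $\1_{\rm A}\otimes\sz_n$ is conserved — then $\Phi$ merely permutes the fixed points: $\Phi$ splits the operator space into its $\pm1$ eigenspaces, $\cL$ preserves each, so the $\Phi=-1$ part of $\rho_\infty$ (which carries exactly the $\sz_n$-off-diagonal content of $\sigma$) is separately stationary and must be argued not to be selected by the dynamics — it is a resonant coherence between symmetry sectors rather than an attracting state. Equivalently, $\frac12[\rho_\infty+\Phi(\rho_\infty)]$ is always a steady state with diagonal reduced part, and is the one reached from $\Phi$-symmetric initial data. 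I would therefore handle the unique-SS case in the main argument and comment on the degenerate case separately.
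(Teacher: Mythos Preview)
Your symmetry argument via $\Phi(X)=(\1_{\rm A}\otimes\sz_n)X(\1_{\rm A}\otimes\sz_n)$ is correct as far as it goes, and cleaner than the paper's route when the steady state is unique. The gap is that the degenerate case is not a side remark you can defer: as you yourself note, $\1_{\rm A}\otimes\sz_n$ is conserved, so degeneracy is the generic situation here, and the theorem is asserted for \emph{every} steady state. Your fallback --- that the $\Phi=-1$ stationary modes are ``not selected by the dynamics'' --- is actually false. The paper itself supplies a two-qubit instance ($L=\sx_1$, $H=\sz_1\sz_2$) whose steady-state subspace contains $\sx_1\sx_2$ and $\sx_1\sy_2$, both in the $\Phi=-1$ sector; the density matrix $\rho_\infty=\tfrac14(\1+\sx_1\sx_2)$ is a legitimate steady state with $\Phi(\rho_\infty)\neq\rho_\infty$, so your argument yields nothing for it --- yet ${\rm tr}_1\rho_\infty=\tfrac12\1_2$ is still diagonal.

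What rescues the theorem in such cases is a structural constraint the symmetry argument does not see. The paper computes directly in the off-diagonal sector: for an ansatz $\rho_{\rm A}\otimes(\sx_n+\mu\sy_n)$, the coupling commutator produces $\{\rho_{\rm A},\sz_{n-1}\}\otimes(\sy_n-\mu\sx_n)$, linearly independent from the $\cL_{\rm A}$ piece, so stationarity forces both $\cL_{\rm A}(\rho_{\rm A})=0$ \emph{and} $\{\rho_{\rm A},\sz_{n-1}\}=0$. The anticommutator condition confines the site-$(n{-}1)$ content of $\rho_{\rm A}$ to the span of $\{\sx_{n-1},\sy_{n-1}\}$, which are traceless, hence ${\rm tr}_{\rm A}\rho_{\rm A}=0$. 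Thus any off-diagonal stationary component --- present or not --- never survives the partial trace. That is the missing idea your proposal would need to import.
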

\begin{proof}
The Liouvillian is invariant to rotations around the $\sz_n$ axis, so we can write a separate SS equation $\cL(\rho)=0$ for each subspace $\ket{i}\bra{j}_n$. Taking an off-diagonal SS ansatz $\rho := \rho_{\rm A}\otimes (\sx_n+\mu \sy_n)$, with a real $\mu$, we get $\cL_{\rm A}(\rho_{\rm A})\otimes (\sx_n+\mu \sy_n)+\{\rho_{\rm A},\sz_{n-1}\}\otimes (\sy_n-\mu\sx_n)=0$. Therefore, $\rho_{\rm A}$ must simultaneously satisfy $\cL_{\rm A} (\rho_{\rm A})=0$ and the zero anticommutator, $\{\rho_{\rm A},\sz_{n-1}\}=0$. Expanding $\rho_{\rm A}$ into an orthogonal basis $b^{(k)}$ acting on sites $1,\ldots,n-2$, $\rho_{\rm A}=\sum_k b^{(k)} \otimes r^{(k)}_{n-1}$, each $r^{(k)}_{n-1}$ must anticommute with $\sz_{n-1}$, and, therefore, must be from a linear span of $\{\sx_{n-1},\sy_{n-1}\}$, leading to ${\rm tr}_{\rm A} \rho_{\rm A}=0$. The reduced SS on the $n$-th spin is never off-diagonal.
\end{proof}
Note that, while sometimes a solution of the two conditions on $\rho_{\rm A}$ might not exist, there are cases where a traceless solution does exist~\cite{foot3}. A simple consequence of the above theorem is that, for the Ising-type Hamiltonian, $H=\sum_{j=1}^{n-1} \sz_j \sz_{j+1}$, and an arbitrary Lindblad Liouvillian on the first $n-1$ spins, on the last spin one is able to reach only all diagonal reduced SSs (the stabilizable set is only the $z$ axis of the Bloch ball). However, as we will now show, the Ising-type Hamiltonian is, in a sense, the worst choice, with other $H$'s being better~\cite{foot2}. We shall demonstrate this with a simple $2$-qubit example which is analytically solvable.

{\em Two-qubit systems.--}
Let us study the set of stabilizable reduced states for Hamiltonians of the form $H=\sx_1 \sx_2+\sy_1 \sy_2+\Delta \sz_1 \sz_2$. Expressing the reduced SS $\rho_2={\rm tr}_1(\rho_\infty)=\frac{1}{2}\1+\mathbf{r}\cdot\boldsymbol{\sigma}$ in terms of the Bloch vector $\mathbf{r}$, we already know that the set of reachable $\mathbf{r}$'s is equal to the whole Bloch ball for $\Delta=1$ (the isotropic Heisenberg model), while it is equal to a line $\mathbf{r}=(0,0,s), s \in [-1/2,1/2]$ for $\Delta \to \infty$ (the Ising model). We are now going to demonstrate that for any finite $\Delta$ the whole Bloch ball is reachable. 
\begin{figure}[t!]
\centering \includegraphics[width=1.1in]{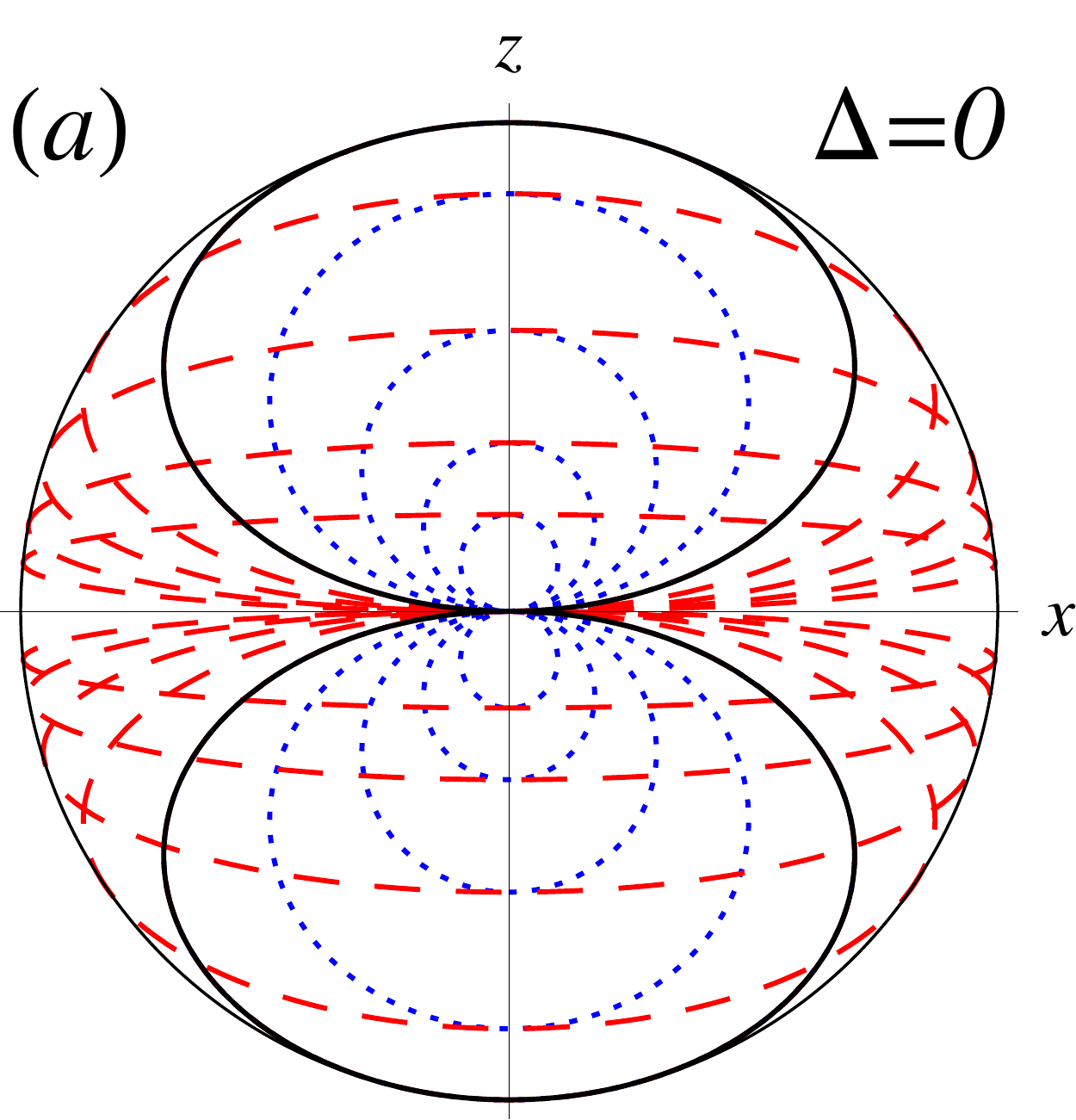} \includegraphics[width=1.1in]{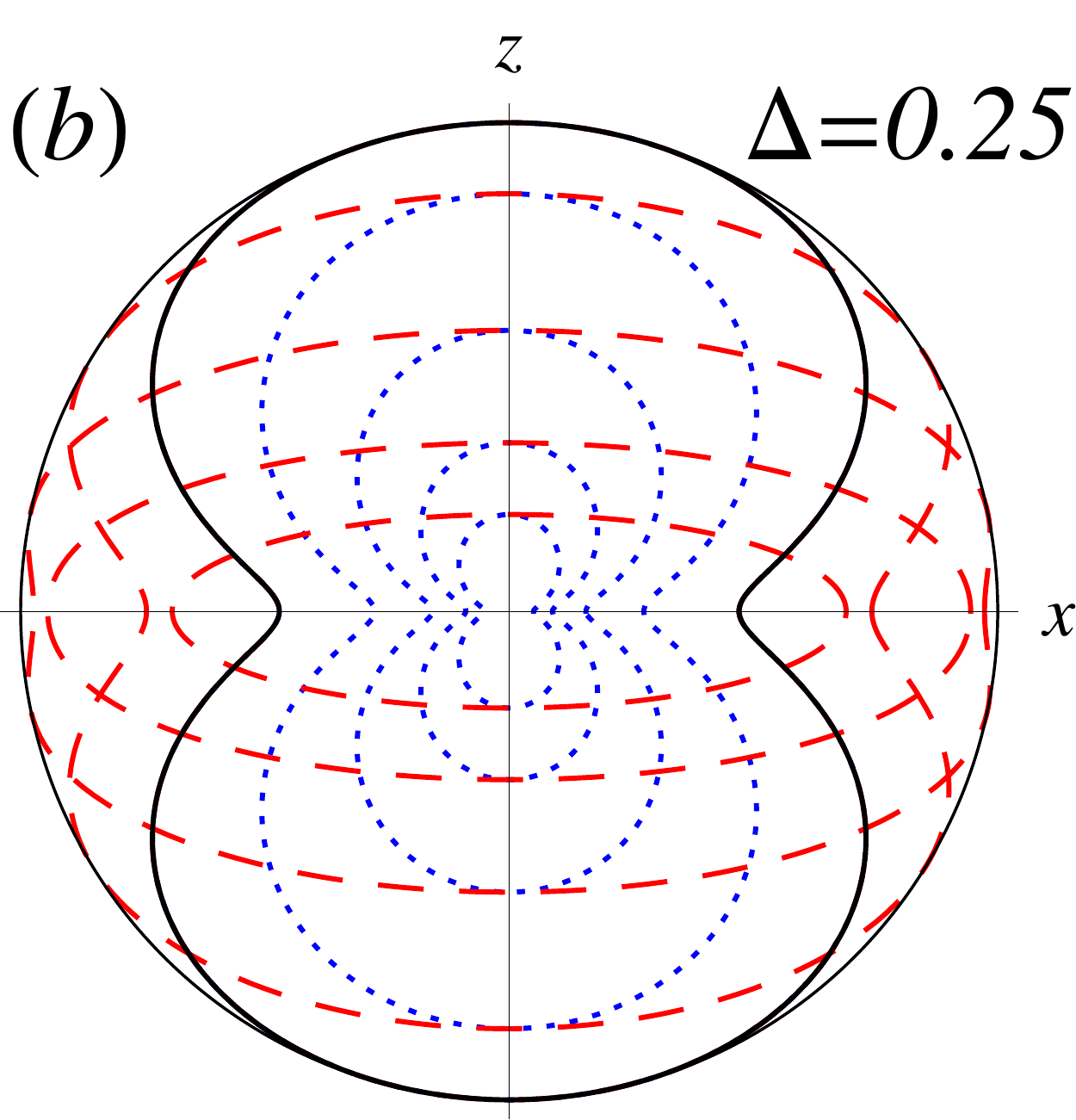} \includegraphics[width=1.1in]{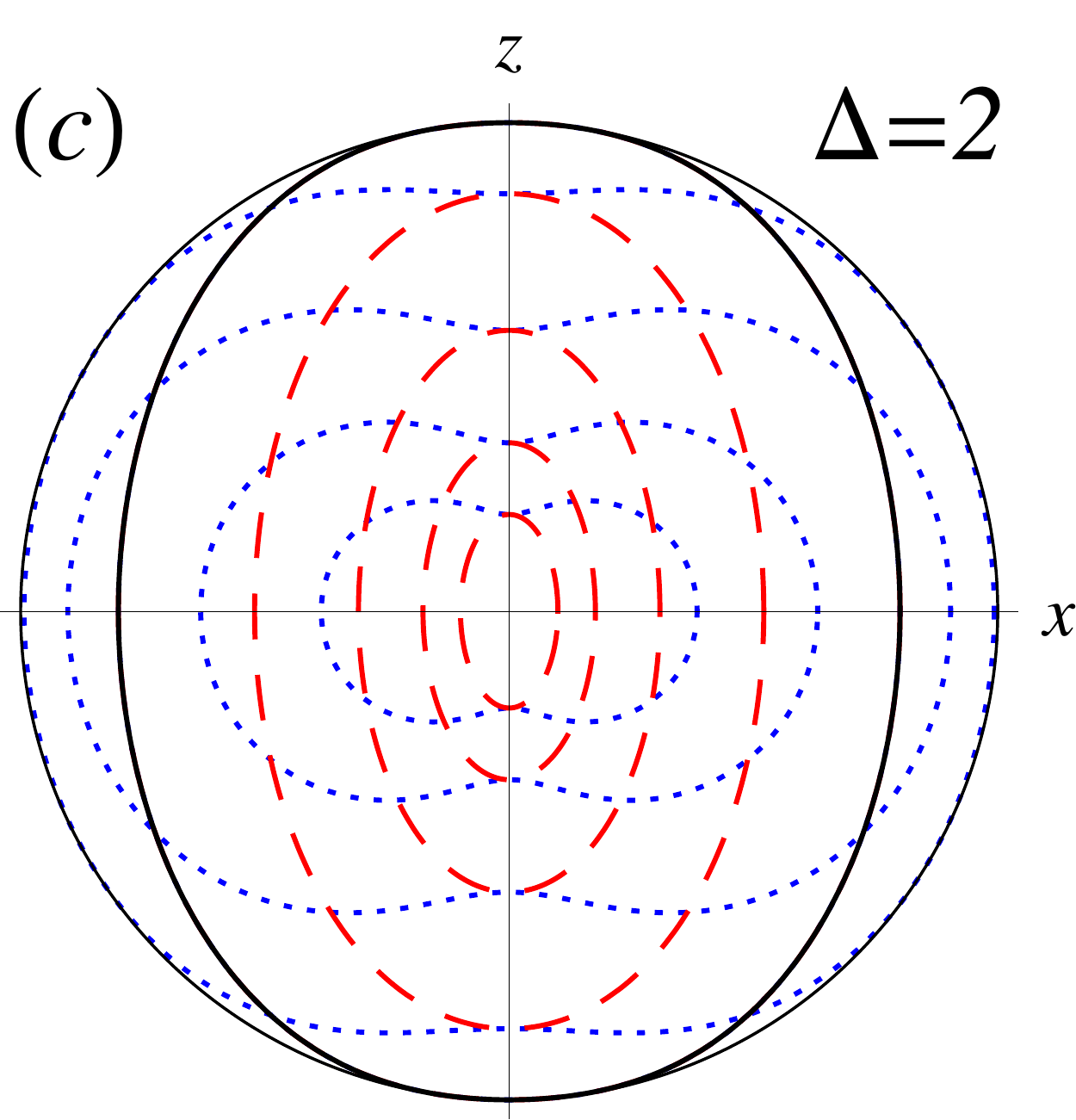}
\caption{(Color online) The Bloch vector $\mathbf{r}$ of reduced stabilizable states for an infinite strength $\cL_1$ and XXZ coupling on two spins, see the text. Different curves correspond to Eq.~(\ref{eq:r2}) for $k=1$ (black), $k=10^u,\quad u=0.5,1,1.5,2$ (red/dashed, decreasing size for increasing $k$), and $k=10^u,\quad u=-2,-1.5,-1,-0.5$ (blue/dotted, decreasing size for decreasing $k$). Varying $k$, all states can be reached.}
\label{fig:infiniteq}
\end{figure}
Let us take $\cL_1$ with a single Lindblad operator $L=\frac{1}{2\sqrt{2}}(\sqrt{q_3} (\sz_1 \cos{\beta}-\sx_1 \sin{\beta}) -\ii \sqrt{q_2}\sy_1)$. It is a rotated deformed $\sigma^+_1$, with the diagonal form parameters~\cite{geometry15} being $q_1=t_{2,3}=0$, while $t_1=\sqrt{q_2 q_3}$. For a given $q_{2,3}$ the chosen $t_1$ is the largest possible, resulting in the largest $\mathbf{r}$~\cite{geometry15}. The SS of such an $\cL$ can be computed explicitly, giving us the reduced SS $\rho_2$. The expression for $\mathbf{r}$ is still fairly complicated and we do not write it out. We notice that, provided $q_{2,3}$ are finite, not all states within the Bloch ball can be reached. Focusing on the limit in which we allow an $\cL_1$ of any strength, we set $q_2=q_3/k$ and take the limit $q_3 \to \infty$, in which the expression for $\mathbf{r}$ simplifies. Taking into account rotational invariance around the $z$ axis, we can limit our discussion to $\mathbf{r}$ laying in the $xz$ plane, writing $\mathbf{r}=r(\cos{\varphi},0,\sin{\varphi})$, and obtaining
\begin{equation}
r^2=\frac{k(\Delta^2+\tan^2{\varphi})(1+\tan^2{\varphi})}{[k(\Delta^2+\tan^2{\varphi})+1+\tan^2{\varphi}]^2}.
\label{eq:r2}
\end{equation}
We plot these curves for a set of $k$'s in Fig.~\ref{fig:infiniteq} . We see that varying $q_3/q_2=k$, the whole Bloch ball can be reached, except for $\Delta=0$, where we cannot reach $\mathbf{r}$ in the $xy$ plane (but can come arbitrarily close). A stabilizable set of reduced states in an important case of longer ${\rm XXZ}$-type chains, which is likely not analytically tractable, needs to be studied in future.

{\em Conclusion.--} We demonstrate that in the presence of a Heisenberg-type interaction one can prepare an arbitrary target one-site state at a distant remote location by acting with Markovian dissipation only on a single site. No control over the medium is required. We also study the convergence time of such a remote-state preparation procedure, finding that the fidelity has a universal scaling form and that, interestingly, the convergence time grows with a distance slower than suggested by the inverse gap of the propagator. We also characterize the set of reachable reduced SSs in the presence of other types of interaction, like the anisotropic Heisenberg coupling. We show that with the Ising interaction one can prepare only diagonal states, while with others (on two qubits) the stabilizable set is equal to the whole Bloch ball.

\end{document}